\RequirePackage{fix-cm}
\documentclass[a4paper,11pt]{article}

\usepackage{authblk}
\usepackage{amsfonts,amsmath,amssymb,amsthm}
\usepackage{paralist}
\usepackage{url}
\usepackage{hyperref}
\usepackage{graphicx}
\usepackage{microtype}
\usepackage{subcaption}
\captionsetup{compatibility=false}
\usepackage{url}
\usepackage{hyperref}

\newcommand{\N}{\mathbb{N}}

\newcommand{\F}{\mathbb{F}}

\newtheorem{definition}{Definition}
\newtheorem{lemma}{Lemma}
\newtheorem{theorem}{Theorem}
\newtheorem{problem}{Problem}


\providecommand{\keywords}[1]{\textbf{\textit{Keywords }} #1}

\begin{document}

\title{Hip to Be (Latin) Square: Maximal Period Sequences from Orthogonal Cellular Automata}

\author[1]{Luca Mariot}
	
\affil[1]{{\small Cyber Security Research Group, Delft University of Technology, Mekelweg 2, Delft, The Netherlands} 
	
	{\small \texttt{l.mariot@tudelft.nl}}}

\maketitle

\begin{abstract}
Orthogonal Cellular Automata (OCA) have been recently investigated in the literature as a new approach to construct orthogonal Latin squares for cryptographic applications such as secret sharing schemes. In this paper, we consider OCA for a different cryptographic task, namely the generation of pseudorandom sequences. The idea is to iterate a dynamical system where the output of an OCA pair is fed back as a new set of coordinates on the superposed squares. The main advantage is that OCA ensure a certain amount of diffusion in the generated sequences, a property which is usually missing from traditional CA-based pseudorandom number generators. We study the problem of finding OCA pairs with maximal period by first performing an exhaustive search up to local rules of diameter $d=5$, and then focusing on the subclass of linear bipermutive rules. In this case, we characterize the periods of the sequences in terms of the order of the subgroup generated by an invertible Sylvester matrix. We finally devise an algorithm based on Lagrange's theorem to efficiently enumerate all linear OCA pairs of maximal period up to diameter $d=11$.
\end{abstract}

\keywords{Cellular Automata, Orthogonal Latin Squares, Pseudorandom Sequences, Multipermutations, Sylvester Matrix}

\section{Introduction}
\label{sec:intro}
Cellular Automata (CA) have been extensively used in the past to define cryptographic primitives, especially \emph{Pseudorandom Number Generators} (PRNGs). Indeed, CA are an interesting computational model for generating pseudorandom sequences, since they can exhibit very chaotic dynamical behaviors. Moreover, the massive parallelism inherent to CA lends itself to efficient hardware implementations. Nevertheless, CA-based PRNGs such as those based on rule 30 pioneered by Wolfram~\cite{wolfram85} have later been found insecure, since an adversary can efficiently recover the initial state of the CA by observing only the trace of the cell sampled as a pseudorandom sequence~\cite{meier91,koc97}. Later research~\cite{martin08,formenti14,leporati13,leporati14} focused on improving the security of Wolfram-like PRNGs by investigating local rules of higher radii with better cryptographic primitives, especially related to the \emph{confusion} principle set forth by Shannon~\cite{shannon49}. Still, another problem that has received little attention in this research thread is that CA in general also has poor \emph{diffusion}, meaning that the differences between distinct initial states propagate too slowly in the dynamic evolution of the CA~\cite{daemen94}. This flaw is mostly due to the local nature of the CA update rule, and it can represent a problem with respect to \emph{differential cryptanalysis} attacks.

In this work, we investigate a new method for generating pseudorandom sequences by cellular automata, based on the iteration of \emph{Orthogonal} CA (OCA). Orthogonal CA are pairs of CA whose superposed global rules generate \emph{orthogonal Latin squares}, and up to now they have been mostly analyzed in connection with \emph{secret sharing schemes}~\cite{mariot18,mariot20}. The main idea underlying our method is to define a dynamical system whose state is the input configuration of an OCA pair. Then, the system is iterated by concatenating the output of the two OCA as a new input configuration. Intuitively, this process starts from a random cell on the orthogonal Latin squares, and uses the superposed entries as the coordinates of the new cell where to "jump" in the next iteration.

The motivation of our work is twofold. First, dynamical systems arising from OCA are reversible, which is useful in cryptographic applications such as block ciphers. Second, the orthogonality of the corresponding Latin squares allows to implement a $(2,2)$-multipermutation~\cite{vaudenay94}, which guarantees a certain amount of diffusion between blocks of $d-1$ cells, where $d$ is the diameter of the local rules.

A desirable property for a PRNG is to feature a large period, starting from any seed. For this reason, after giving the necessary background definitions in Section~\ref{sec:prel} and defining the dynamical system in Section~\ref{subsec:desc-gen}, we perform an exhaustive search of OCA pairs up to diameter $d=5$ to compute their cycle decompositions, remarking that the maximal period $2^{2n}-1$ is attained only by pairs formed by linear rules. Subsequently, in Section~\ref{sec:char-loca} we characterize the periods of linear OCA pairs as the order of the subgroup generated by the associated Sylvester matrix. Next, leveraging on Lagrange's theorem, we prove that the maximum order is indeed upper bounded by $2^{2n}-1$. We finally design an algorithm to efficiently enumerate all linear OCA pairs of maximal periods based on our theoretical results, and apply it up to diameter $d=11$. Such findings cue us to several open problems and further directions of research on this subject, which we discuss in the conclusions of the paper in Section~\ref{sec:outro}.


\section{Background}
\label{sec:prel}
We start by giving some basic definitions related to cellular automata and orthogonal Latin squares used in the remainder of the paper. In what follows, by $[N] = \{1,\cdots,N\}$ we denote the set of the first $N$ positive integers, while $\F_q$ stands for the \emph{finite field} with $q$ elements, where $q$ is a power of a prime number. Further, the $n$-dimensional vector space over $\F_q$ is denoted by $\F_q^n$.

Cellular automata are a parallel computational model whose global state is described by an array of \emph{cells}, usually arranged on a line or a grid. Each cell synchronously updates its state by evaluating a local update rule on itself and a certain amount of neighboring cells. In this work, we are mainly interested in the model of one-dimensional \emph{No-Boundary CA} (NBCA), studied in~\cite{mariot19,mariot20} respectively in the context of S-boxes and orthogonal Latin squares:
\begin{definition}
\label{def:ca}
A \emph{No-Boundary CA} is a vectorial function $F:\F_q^n \to \F_q^{n-d+1}$ defined by a \emph{local rule} $f: \F_q^d \to \F_q$ with \emph{diameter} $d\le n$, such that
\begin{equation}
\label{eq:glob-ca}
F(x_1,\cdots,x_n) = (f(x_1,\cdots,x_d),\cdots,f(x_{n-d+1},\cdots,x_n))
\end{equation}
for all $x = (x_1,\cdots,x_n) \in \F_q^n$.
\end{definition}
In other words, each output coordinate is determined by evaluating the local rule $f$ on the \emph{neighborhood} formed by the $i$-th input cell and the $d-1$ cells to its right. The output vector is shorter than the input, since the local rule is applied only until the coordinate $n-d+1$ to avoid boundary conditions (which is the reason why this model is called \emph{No-Boundary}). One of the most studied settings are CA over the binary alphabet (i.e. $q=2$), where the local rule is a \emph{Boolean function} $f: \F_2^d \to \F_2$ of $d$ variables. In this case, it is common to use \emph{Wolfram's convention} to encode a CA local rule, which is basically the decimal encoding of its $2^d$-bit truth table~\cite{wolfram83}.

A \emph{Latin square} of order $N \in \N$ is a $N \times N$ matrix $L$ with entries from $[N]$ such that every row and every column are permutations of $[N]$. Two Latin squares $L_1$ and $L_2$ of order $N$ are called \emph{orthogonal} if
\begin{equation}
  (L_1(i_1,j_1),L_2(i_1,j_1)) \neq (L_1(i_2,j_2),L_2(i_2,j_2))
\end{equation}
for all distinct pairs of coordinates $(i_1,j_1), (i_2,j_2) \in [N]\times [N]$. Stated otherwise, two Latin squares are orthogonal if their \emph{superposition} yields all
the ordered pairs of the Cartesian product $[N] \times [N]$. Orthogonal Latin squares have several applications in cryptography, most notably related to \emph{secret sharing schemes}~\cite{stinson04}.

In~\cite{mariot20}, the authors showed how to generate orthogonal Latin squares with cellular automata, which have later been named \emph{orthogonal CA} (OCA) in~\cite{mariot18}. The basic idea is to consider CA with bipermutive local rules. More precisely, a function $f: \F_q^d \to \F_q$ is called \emph{bipermutive} if, by fixing either the leftmost or the righmost $d-1$ input variables to any value, the resulting restriction on the remaining coordinate is a permutation of $\F_q$. Eloranta~\cite{eloranta93} proved that a NBCA $F: \F_q^{2(d-1)} \to \F_q^{(d-1)}$ with bipermutive local rule $f: \F_q^d \to \F_q$ gives rise to a Latin square of order $N=q^{d-1}$, a result which was later independently rediscovered in~\cite{mariot16}. The idea is to use the left and right halves of the input configuration to index respectively the row and the column of the square, and the output configuration as the entry at those coordinates. The characterization of OCA given in~\cite{mariot20} considers bipermutive local rules that are also \emph{linear}, i.e. $f: \F_q^d \to \F_q$ is defined for all $x \in \F_q^d$ as a linear combination $f(x_1,\cdots,x_d) = a_1x_1 + \cdots + a_dx_d$, with $a_i \in \F_q$ for $i \in [d]$ and the constraint that $a_1, a_d \neq 0$ to ensure bipermutivity. A polynomial of degree $n=d-1$ with coefficients in $\F_q$ can be naturally associated to a linear rule, by using the mapping $f \mapsto P_f(X) = a_1 + a_2X + \cdots + a_dX^n$. Then, the characterization of linear OCA proved in~\cite{mariot20} can be stated as follows:
\begin{theorem}
\label{thm:lin-oca}
Let $F,G: \F_q^{2n)} \to \F_q^{n}$ be two NBCA defined by linear bipermutive local rules $f,g: \F_q^d \to \F_q$ of diameter $d$, with $n=d-1$. Then, the two Latin squares of order $N = q^{n}$ generated by F and G are orthogonal if and only if the polynomials $P_f(X), P_g(X) \in \F_q[X]$ of degree $n$ respectively associated to $f$ and $g$ are relatively prime.
\end{theorem}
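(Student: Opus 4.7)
The plan is to translate orthogonality of the two Latin squares into the invertibility of a single $2n \times 2n$ matrix, and then identify that matrix with the Sylvester matrix of $P_f$ and $P_g$ (up to a well-understood reversal of coefficients).

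First I would reformulate the orthogonality condition. Writing an input $x \in \F_q^{2n}$ as a concatenation $(i,j) \in \F_q^n \times \F_q^n$ of row and column indices, the Latin squares generated by $F$ and $G$ are orthogonal if and only if the superposition map $H \colon \F_q^{2n} \to \F_q^{2n}$, $H(x) = (F(x),G(x))$, is injective and hence, by counting, bijective. Since $f$ and $g$ are linear, $F$ and $G$ are linear, so $H$ is a linear endomorphism and orthogonality is equivalent to $\ker H = \{0\}$, i.e.\ to the invertibility of the matrix $M_H$ representing $H$.

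Next I would write $M_H$ out explicitly. The $i$-th row of the $n \times 2n$ matrix $M_F$ of $F$ is the coefficient vector $(a_1,\dots,a_d)$ shifted so that $a_1$ sits in column $i$; similarly for $M_G$ using $(b_1,\dots,b_d)$. Stacking $M_F$ on top of $M_G$ yields a $2n \times 2n$ matrix with the familiar banded shape of the Sylvester matrix $\mathrm{Syl}(P_f^{*},P_g^{*})$ of the reciprocal polynomials $P_f^{*}(X) = X^n P_f(1/X)$ and $P_g^{*}(X) = X^n P_g(1/X)$. Bipermutivity forces $a_1 a_d b_1 b_d \neq 0$, so $P \mapsto P^{*}$ preserves degree and $\gcd(P_f,P_g) = 1$ is equivalent to $\gcd(P_f^{*},P_g^{*}) = 1$.

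The final step is the classical identity $\det \mathrm{Syl}(P_f^{*},P_g^{*}) = \mathrm{Res}(P_f^{*},P_g^{*})$, which is nonzero precisely when the two polynomials are coprime; combining this with the previous paragraph yields the theorem. The main obstacle is not conceptual but notational, namely matching the exact ordering of rows and columns in the stacked matrix to the standard Sylvester matrix in the presence of the reversal $P \mapsto P^{*}$. A self-contained alternative that sidesteps this bookkeeping is to note that $\ker F$ is exactly the space of length-$2n$ prefixes of sequences satisfying the linear recurrence with characteristic polynomial $P_f^{*}$, and analogously for $G$; by standard results on simultaneous linear recurrences, $\ker F \cap \ker G = \{0\}$ iff $\gcd(P_f^{*},P_g^{*}) = 1$, which again is equivalent to $\gcd(P_f,P_g)=1$.
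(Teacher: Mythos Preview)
Your proposal is correct and follows essentially the same route that the paper (quoting~\cite{mariot20}) sketches in Section~\ref{sec:char-loca}: orthogonality of the two Latin squares is equivalent to bijectivity of the linear map $H(x)=(F(x),G(x))$, whose matrix is the Sylvester matrix $M_{f,g}$ of Equation~\eqref{eq:sylv-matr}, and invertibility of a Sylvester matrix is the classical resultant criterion for coprimality. The only place where you go slightly beyond the paper's presentation is your care with the coefficient ordering: you correctly observe that the banded matrix obtained by stacking $M_F$ on $M_G$ is, under the usual ``leading coefficient first'' convention, the Sylvester matrix of the reciprocals $P_f^{*},P_g^{*}$ rather than of $P_f,P_g$ themselves, and you then close the gap via the bipermutivity constraint $a_1a_db_1b_d\neq 0$. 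The paper simply calls $M_{f,g}$ ``the Sylvester matrix associated to $P_f$ and $P_g$'' without dwelling on this convention, so your version is in fact a bit more precise on that point; your alternative argument via simultaneous linear recurrences is a pleasant self-contained variant but not needed here.
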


\section{Generating Sequences with Orthogonal CA}
\label{sec:gen-oca}
In this section we first define the dynamical system used to generate sequences with OCA, and formally state the problem of identifying the local rules that induce maximum length cycles. We then exhaustively enumerate such rules up to diameter $d=5$.

\subsection{Description of the Generator and Problem Statement}
\label{subsec:desc-gen}
As showed in Section~\ref{sec:prel}, any bipermutive CA with local rule $f$ of diameter $d$ defines a Latin square of order $q^n$, where $n=d-1$ and $q$ is the size of the alphabet. However, one cannot use such a CA to directly generate a pseudorandom sequence, as done in Wolfram-like PRNGs. Indeed, since the cellular automaton is in the no-boundary setting and the initial configuration is composed of $2n$ cells, only a configuration of length $n$ results from a single evaluation of the global rule, leaving not enough cells for a second iteration.

Instead of using a single local rule, the main idea behind our generator is to consider a \emph{pair} of local rules $f,g: \F_q^d \to \F_q$, both applied to the same initial configuration $s$ of length $2n = 2(d-1)$, as in the case of orthogonal Latin squares. In this way, one gets two output configurations $z = F(s)$, $w = G(s)$, each of length $n$, generated by the global rules $F,G$ respectively induced by $f$ and $g$. At this point, we construct a new configuration of length $2n$ by \emph{concatenating} $z$ and $w$. Therefore, the outputs of the NBCA $F,G$ are used respectively as a new row and a new column coordinate, which will in turn point to a new pair of entries given by $F$ and $G$. Seen on the superposed Latin squares generated by $F$ and $G$, this process can be visualized as starting from the pair of entries indexed by the initial configuration $s$, and then using such entries as the destination coordinates where to "jump" in the next step.

We can now give a formal definition of the dynamical system $\mathcal{S}$ intuitively described above. Given $d \in \N$ and $q$ a power of a prime, the phase space of $\mathcal{S}$ is the vector space $\F_q^{2n}$ where $n=d-1$, i.e. the set of all vectors of length $2n$ over the finite field $\F_q$. In particular $\F_q^{2n}$ is isomorphic to the Cartesian product $\F_q^n \times \F_q^n$, the set of all ordered pairs of $n$-dimensional vectors over $\F_q$. Slightly abusing notation, in what follows we will also consider a pair of vectors of $n$ components as an element of $\F_q^{2n}$. In fact, going from one representation to the other simply entails adding and dropping parentheses accordingly.

Let $s(0) = (x(0), y(0)) \in \F_q^{2n}$ be the initial state of the system $\mathcal{S}$. Further, let $f,g: \F_q^d \to \F_q$ be two bipermutive local rules of diameter $d$, with $F,G: \F_q^{2n} \to \F_q^{n}$ being the corresponding NBCA global rules. Finally, denote by $H: \F_q^{2n} \to \F_q^{2n}$ the function that \emph{concatenates} the results of the two global rules $F$ and $G$ to the same CA input $x \in \F_q^{2n}$. Then, the system $\mathcal{S}$ updates its current state $s(t)$ at time $t \in \N$ through the following equation:
\begin{equation}
\label{eq:sys-upd}
s(t+1) = H(x(t+1), y(t+1)) = (F(s(t)), G(s(t))) \enspace .
\end{equation}
In other words, the state of the system is always separated in two equal-size parts, where the left part comes from the application of the first global rule on the whole state in the previous step, whereas the right part is defined analogously as the result of the second global rule evaluated on the previous state. Figure~\ref{fig:dynsys} depicts the block diagram for the dynamical evolution of the system. In what follows, we will compactly denote such a system $\mathcal{S}$ by the pair $\langle \F_q^{2n}, H \rangle$.
\begin{figure}[t]
    \centering
    \includegraphics[height=9cm]{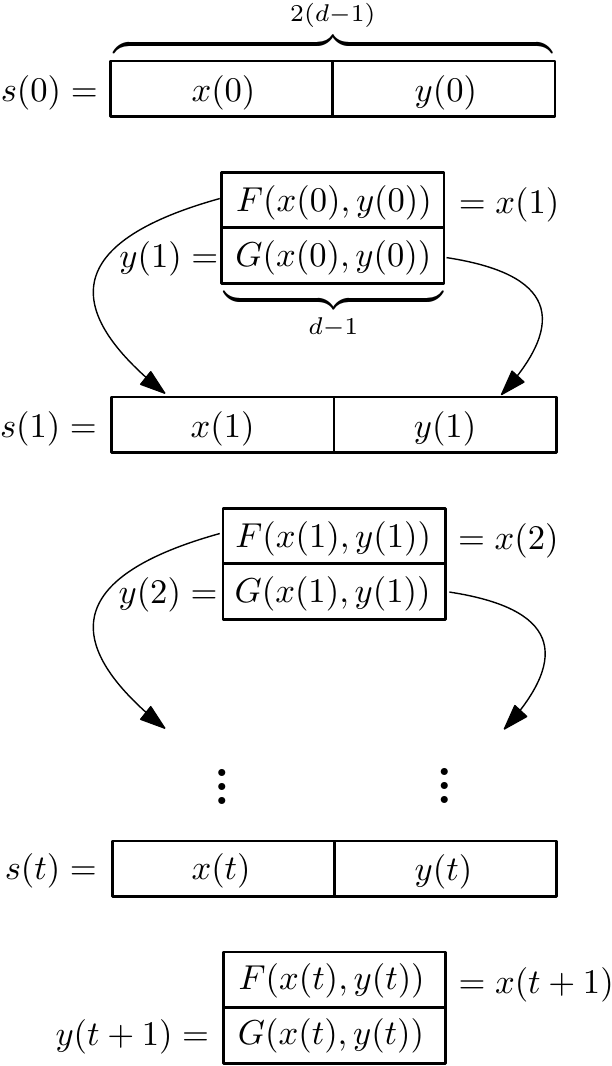}
    \caption{block diagram for the dynamical evolution of the system starting from the initial state $s(0) = (x(0), y(0)) \in \F_2^{2n}$.}
    \label{fig:dynsys}
\end{figure}
In principle, one could sample the orbit arising from the iteration of Equation~\eqref{eq:sys-upd} as a pseudorandom sequence, starting from a random initial configuration $s(0)$. However, pseudorandom sequences adopted in domains such as cryptography need to satisfy several stringent properties, meaning that one cannot simply select a pair of local rules at random. For this reason, beside choosing $f$ and $g$ to be bipermutive local rules, we also require that the Latin squares generated by the global rules $F$ and $G$ are \emph{orthogonal}. The motivation is twofold:
\begin{compactenum}
\item As recalled in Section~\ref{sec:prel}, a pair of orthogonal Latin squares of order $N$ defines a \emph{permutation} over the Cartesian product $[N]\times [N]$. It follows that the update function defined in Equation~\eqref{eq:sys-upd} is bijective. Thus, the resulting system is \emph{reversible}, or equivalently its trajectories are all disjoint cycles, without transient parts. In practice, reversibility implies that the system can also be run backward in time, by applying the inverse permutation. Such a property is important in certain cryptographic primitives (e.g., SPN block ciphers) where, beside generating pseudorandom sequences, there is also the need of inverting the global state of the cipher to ensure decryption. In the particular setting of OCA, one could invert the system by using the algorithm based on coupled de Bruijn graphs described in~\cite{mariot18}.
\item Orthogonal Latin squares coincide with a particular kind of \emph{Maximum Distance Separable (MDS) codes}, which are of great importance in the design of \emph{diffusion layers} for block ciphers. The reason is that layers based on MDS codes spread the statistical structure of the plaintext over the ciphertext in an optimal way, providing resistance against differential cryptanalysis. In particular, as shown by Vaudenay~\cite{vaudenay94}, the function $H$ defined in Equation~\eqref{eq:sys-upd} corresponds to a $(2,2)-$\emph{multipermutation}, i.e. any distinct pair of input/output tuples $(x,y, F(x,y), G(x,y))$ and $(x',y',F(x',y'),G(x',y'))$ \emph{cannot agree} on any $2$ coordinates. Stated differently, such tuples must be at Hamming distance at least $3$.
\end{compactenum}
In this work we investigate the dynamics of the system $\mathcal{S} = \langle \F_q^{2n}, H \rangle$ when the underlying local rules $f$ and $g$ generate a pair of OCA. More precisely, we are interested in studying the periods of the cycles in $\mathcal{S}$. Given a state $s \in \F_q^{2n}$, the (minimum) \emph{period} of $s$ under $\mathcal{S}$ is the smallest positive integer $p$ such that $H^{p}(s) = s$. In other words, $p$ is the smallest number of iterations of $H$ after which the state of the system returns to the initial condition $s$. In cryptography, pseudorandom sequences with very large periods are usually sought. This is due to the fact that  cryptographic primitives such as \emph{stream ciphers} encrypt the plaintext by computing the bitwise XOR between the plaintext and a \emph{keystream}, which is actually a pseudorandom sequence generated by stretching a short secret key~\cite{stinson18}. In particular, an attacker can mount certain attacks based on frequency analysis when the pseudorandom sequences used as keystreams have a period that is shorter than the plaintext. Ideally, the dynamics of a pseudorandom generator used in cryptography should be composed of a single large cycle that visits all states in the phase space.

We now state the problem addressed in the rest of the paper:
\begin{problem}
\label{pb:stat}
Let $d \in \N$ and $q$ be a power of a prime number, and let $n=d-1$. What is the maximal period attainable by the system $\mathcal{S} = \langle \F_q^{2n}, H \rangle$, with $H$ defined as in Equation~\eqref{eq:sys-upd}, when the bipermutive local rules $f,g: \F_q^d \to \F_q$ induce a pair of orthogonal CA?
\end{problem}

We will mainly consider the binary alphabet $\F_2$ (i.e., $q=2$), since this is the simplest case to analyze and also the most useful one for cryptographic applications. However, most of the theoretical results presented in the next sections can be straightforwardly lifted to OCA over any finite field $\F_q$.

\subsection{Empirical Search up to $d=5$}
\label{subsec:prel-res}
We started our investigation of Problem~\ref{pb:stat} by conducting an exhaustive search over all pairs of bipermutive rules up to diameter $d=5$, filtering only those that yield OCA and analyzing the cycle decompositions of the corresponding dynamical systems. For each diameter $d\le 6$, Table~\ref{tab:sizes} reports the size of the CA output configuration $n=d-1$, the order of the corresponding Latin squares $2^n$, the number of bipermutive local rules $\mathcal{B}_d = 2^{2^{d-2}}$, the number of ordered pairs that can be formed with them $\mathcal{B}_d^2 = 2^{2^{d-2}}\cdot 2^{2^{d-2}} = 2^{2^n}$, and finally the number of pairs which generate OCA.
\begin{table}[t]
\centering
\caption{Size of the search space $\mathcal{B}^2_d$ of bipermutive rules pairs with diameter $d$.}
\begin{tabular}{p{1cm}p{1cm}p{1cm}p{1cm}p{1cm}p{1cm}}
\hline\noalign{\smallskip}
$d$ & $n$ & $2^{n}$ & $\mathcal{B}_d$ & $\mathcal{B}^2_d$ & $\mathcal{OCA}_d$ \\
\noalign{\smallskip}\hline
2 & 1 & 2 & 2 & 4 & 0 \\
3 & 2 & 4 & 4 & 16 & 8 \\
4 & 3 & 8 & 16 & 256 & 72 \\
5 & 4 & 16 & 256 & 65536 & 1704 \\
6 & 5 & 32 & 65536 & $4.3 \cdot 10^9$ & 533480 \\
\hline
\end{tabular}
\label{tab:sizes}
\end{table}
The last column has been taken from~\cite{mariot17}, where an exhaustive search of all OCA up to diameter $d=6$ is performed by means of a combinatorial algorithm that enumerates only pair of bipermutive local rules which are \emph{pairwise balanced}. As a matter of fact, up to now there are no known methods to enumerate generic OCA pairs, unless one narrows the attention to the case of linear rules addressed in~\cite{mariot20}. Further, remark that the numbers in the last column of Table~\ref{tab:sizes} multiplies by $8$ the numbers of OCA pairs reported in~\cite{mariot17}, since we did not consider any symmetry relation preserving the orthogonality property as done in that work.

The size of the search space of interest for our empirical investigation is thus specified by the fourth column of Table~\ref{tab:sizes}, $\mathcal{B}_d^2$. In particular, our exhaustive search enumerated all ordered pairs of bipermutive local rules of diameter $d$, selected only those that generated OCA, and for each of them determined the cycle decomposition of the corresponding dynamical system $\mathcal{S}$ defined in Section~\ref{subsec:desc-gen}. In principle, it is also possible to extend such search to diameter $d=6$, since the size of the resulting space ($\mathcal{B}_2^d \approx 4.3 \cdot 10^9$) is still amenable to exhaustive enumeration in a reasonable time. However, in our experiments we limited our search up to $d=5$ since this was enough to inform our theoretical investigation described in the next section.

The case of diameter $d=2$ can be immediately discarded, since no OCA pairs exist with this parameter. In fact, one can easily see that there are only two Latin squares of order $2^{2-1} = 2$, and they are not orthogonal. For diameter $d=3$, a total of $8$ OCA pairs result from the search over all 16 pairs of bipermutive rules. \emph{All these OCA pairs resulted in the same cycle decomposition structure}, i.e. one fixed point and a single cycle of length $15$.
As an example, Figure~\ref{fig:oca-90-150} reports the cycle decomposition of the OCA pair formed by the rules with Wolfram codes 90 and 150 respectively, along with the associated paths on the superposed squares.
\begin{figure}[t]
\centering
\subfloat[Cycle decomposition]{\includegraphics[width=0.5\textwidth]{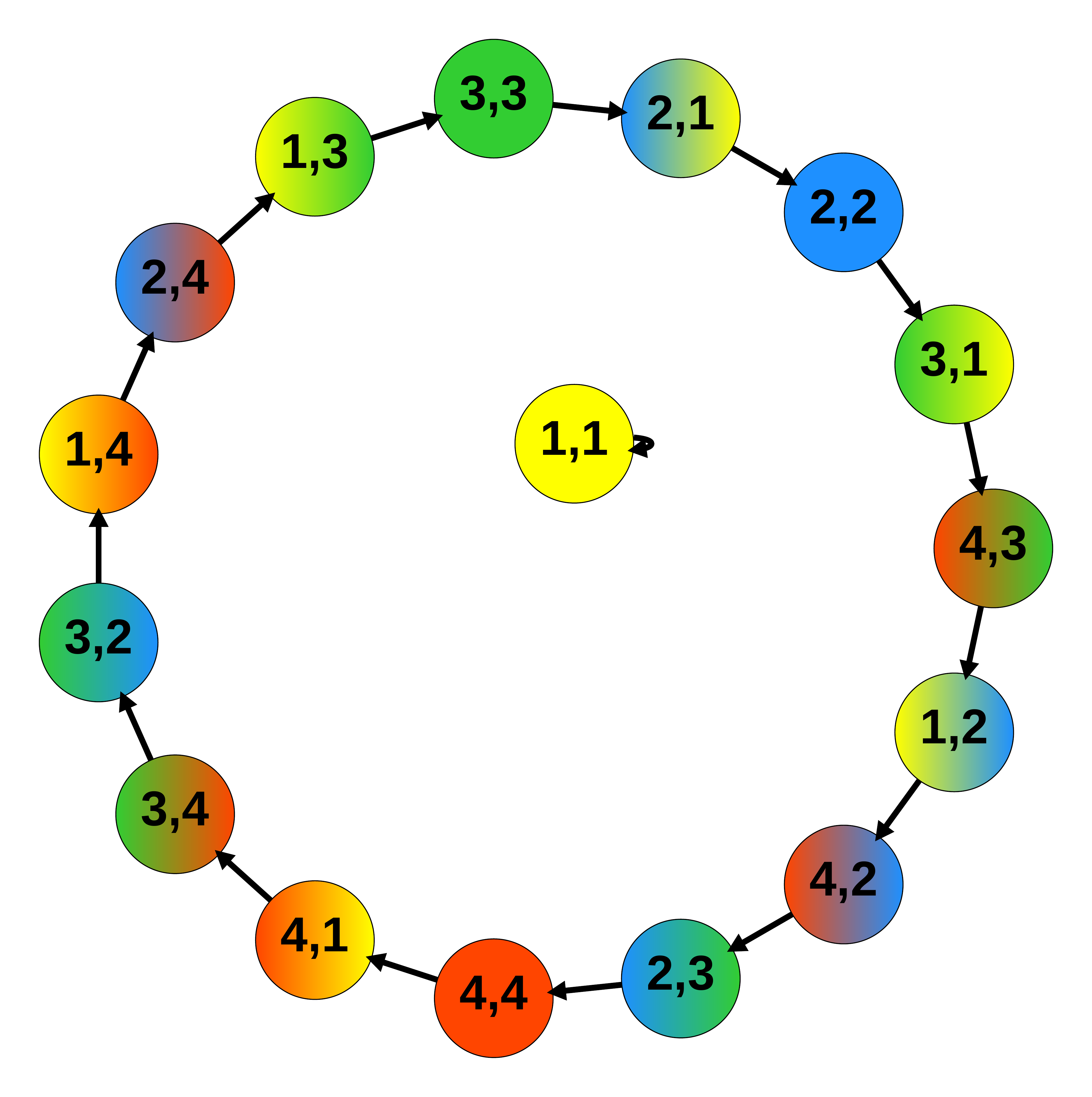}}%
\phantom{MM}%
\subfloat[Paths on superposed squares]{
\raisebox{0.8cm}{
\includegraphics[width=0.35\textwidth]{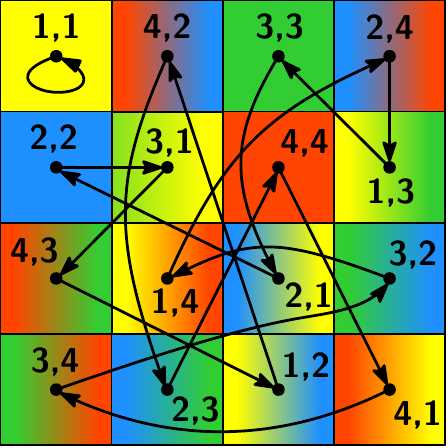}}
}
\caption{Example of cycle decomposition and paths on the squares generated by the OCA with local rules 90 and 150.}
\label{fig:oca-90-150}
\end{figure}
Consequently, all 8 OCA pairs of diameter $d=3$ feature a maximum cycle length which is equal to the area of the square ($2^{2\cdot 2} = 16$) minus 1, or equivalently, there is a single walk on the superposed squares that visits all cells except one (i.e., the fixed point).
 \begin{figure}[b]
 \centering
 \includegraphics[width=0.95\textwidth]{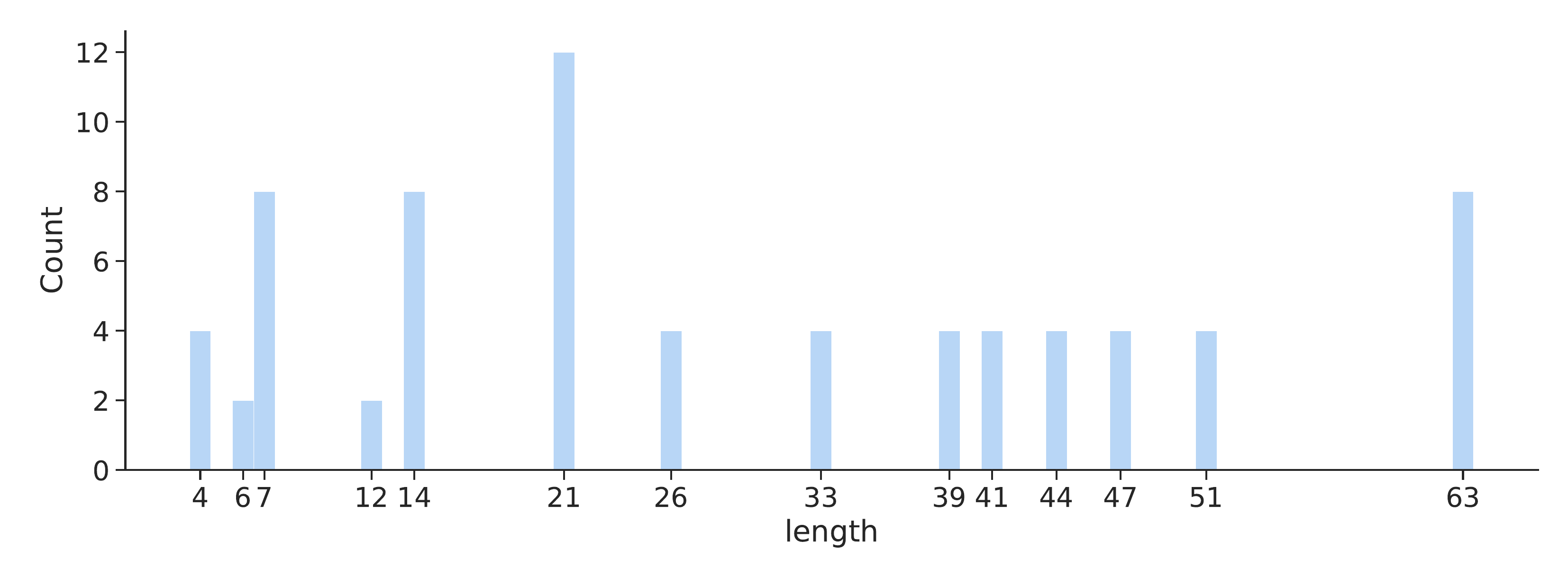}
 \caption{Distribution of maximum cycle lengths for OCA of diameter $d=4$.}
 \label{fig:distr}
\end{figure}
Similar conclusions can be drawn also from the results for $d=4$ and $d=5$, with Figure~\ref{fig:distr} reporting the distribution of the maximum cycle lengths for OCA with $d=4$ as an example. In particular, \emph{no OCA pair is able to attain the $2^{2n}$ upper bound on the maximum cycle length}. In other words, there is no OCA featuring a single ``pure cycle'' that visits all cells in the superposed squares. Rather, the best decomposition possible is a single fixed point and a cycle of length $2^{2n}-1$. This almost optimal situation is achieved by 8 OCA pairs for $d=4$, whose largest cycle has length $63$, and 36 pairs for $d=5$, with a maximum length cycle of $255$. A closer inspection of the types of local rules forming such OCA leads us to the second interesting finding: \emph{all OCA pairs reaching a maximum cycle length of $2^{2n}-1$ are defined by linear local rules}. For this reason, in the remainder of this paper we consider only OCA pairs defined by linear rules.

\section{The Case of Linear OCA}
\label{sec:char-loca}
We now delve into the case of linear OCA pairs, providing a theoretical characterization of their periods. As it often happens when studying the behavior of dynamical systems governed by a linear transformation, such a characterization is made possible by the use of linear algebra methods.

Let $f,g: \F_2^d \to \F_2$ be two linear bipermutive local rules of diameter $d$. Following the notation recalled in Section~\ref{sec:prel}, we assume that the linear combinations defining $f$ and $g$ are respectively given by the two vectors $a = (a_1, \cdots, a_d) \in \F_2^d$ $b = (b_1, \cdots, b_d) \in \F_2^d$, where $a_1=b_1=a_d=b_d=1$ to ensure bipermutivity. Let $P_f(X), P_g(X) \in \F_2[X]$ be the monic polynomials of degree $n=d-1$ and nonzero constant term associated to $f$ and $g$. Then, by Theorem~\ref{thm:lin-oca} $f$ and $g$ induce a pair of OCA if and only if their polynomials $P_f(X)$ and $P_g(X)$ are relatively prime. As proved in~\cite{mariot20}, this characterization stands on the fact that the transformation which associates the CA initial configuration $(x,y) \in \F_2^{2n}$ to the pair of outputs $F(x,y), G(x,y)$ is defined by the following $2n \times 2n$ matrix:

\begin{equation}
    \label{eq:sylv-matr}
    M_{f,g} =
    \begin{pmatrix}
      a_1    & \cdots & a_{d} & 0 & \cdots & \cdots & \cdots & \cdots & 0 \\
      0      & a_1    & \cdots  & a_{d} & 0 & \cdots & \cdots & \cdots & 0 \\
      \vdots & \vdots & \vdots & \ddots  & \vdots & \vdots & \vdots & \ddots & \vdots \\
      0 & \cdots & \cdots & \cdots & \cdots & 0 & a_1 & \cdots & a_{d} \\
      b_1    & \cdots & b_{d} & 0 & \cdots & \cdots & \cdots & \cdots & 0 \\
      0      & b_1    & \cdots  & b_{d} & 0 & \cdots & \cdots & \cdots & 0 \\
      \vdots & \vdots & \vdots & \ddots  & \vdots & \vdots & \vdots & \ddots & \vdots \\
      0 & \cdots & \cdots & \cdots & \cdots & 0 & b_1 & \cdots & b_{d} \\
\end{pmatrix} \enspace .
\end{equation}
In particular, the two rules generate a pair of OCA if and only if the transformation $M_{f,g}\cdot (x,y)^\top$ is bijective, or equivalently if and only if $M_{f,g}$ is invertible. The matrix defined in Equation~\ref{eq:sylv-matr} has been extensively investigated in the literature: indeed, $M_{f,g}$ is the \emph{Sylvester matrix} associated to the two polynomials $P_f(X)$ and $P_g(X)$. It is a well known fact that the determinant of a Sylvester matrix---also called the \emph{resultant}---is not null if and only if $P_f(X)$ and $P_g(X)$ do not have any factor in common~\cite{gelfand08}. Therefore, the research in~\cite{mariot20} focused on counting the number of invertible Sylvester matrices defined by linear bipermutive rules, or equivalently on counting the number of linear OCA pairs.

The next lemma shows that computing the $t$-th iteration of the dynamical system $\mathcal{S}$ defined in Section~\ref{subsec:desc-gen} corresponds to multiplying the $t$-th power of the Sylvester matrix $M_{f,g}$ by the initial state vector, when the local rules are linear.
\begin{lemma}
\label{lm:t-iter}
Given $d \in \N$ and $n=d-1$, let $\mathcal{S} = \langle \F_2^{2n}, H \rangle$ be the dynamical system defined by the update function in Equation~\eqref{eq:sys-upd}, where the CA $F,G: \F_2^{2n} \to \F_2^n$ are defined by two bipermutive local rules $f,g: \F_2^d \to \F_2$ of diameter $d$ whose associated polynomials $P_f(X), P_g(X) \in \F_2[X]$ are coprime. Then, for any initial state $s(0) = (x(0),y(0)) \in \F_2^{2n}$, the state of $\mathcal{S}$ at time $t \in \N$ is given by:
\begin{equation}
\label{eq:t-iter}
s(t) = (x(t),y(t)) = M_{f,g}^t \cdot s(0) = M_{f,g}^t \cdot (x(0), y(0))^\top \enspace .
\end{equation}
\end{lemma}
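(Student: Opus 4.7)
The plan is to prove the statement by induction on $t$, with the base case being immediate and the inductive step reducing to verifying that a single application of the update map $H$ coincides with left-multiplication by the Sylvester matrix $M_{f,g}$.

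For the base case $t = 0$, note that $M_{f,g}^0 = \Id$, so $M_{f,g}^0 \cdot s(0)^\top = s(0)$, as required. For the inductive step, assuming $s(t) = M_{f,g}^t \cdot s(0)^\top$, one applies the update rule from Equation~\eqref{eq:sys-upd} to get $s(t+1) = H(s(t)) = (F(s(t)), G(s(t)))$, so it suffices to establish the identity $H(s) = M_{f,g} \cdot s^\top$ for every $s \in \F_2^{2n}$; once this is shown, associativity of matrix multiplication gives $s(t+1) = M_{f,g} \cdot M_{f,g}^t \cdot s(0)^\top = M_{f,g}^{t+1} \cdot s(0)^\top$.

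The core of the argument is thus to verify the one-step identity by inspecting the block structure of $M_{f,g}$ in Equation~\eqref{eq:sylv-matr}. The matrix naturally splits into two $n \times 2n$ horizontal blocks, whose $i$-th row (for $i \in [n]$) has the form $(0, \ldots, 0, a_1, \ldots, a_d, 0, \ldots, 0)$ in the upper block and $(0, \ldots, 0, b_1, \ldots, b_d, 0, \ldots, 0)$ in the lower block, with the nonzero entries starting at column $i$. Multiplying the $i$-th row of the upper block by $s^\top$ yields $\sum_{j=1}^{d} a_j s_{i+j-1} = f(s_i, \ldots, s_{i+d-1})$, which by Definition~\ref{def:ca} is precisely the $i$-th coordinate of $F(s)$; an identical computation on the lower block gives the $i$-th coordinate of $G(s)$. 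Stacking these two blocks recovers $(F(s), G(s)) = H(s)$ coordinate-wise.

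I do not foresee a real obstacle here: the proof is essentially a bookkeeping exercise matching the shifted rows of $M_{f,g}$ against the evaluations of $f$ and $g$ on successive neighborhoods. The only delicate point is making sure that the index of the starting column in each row correctly aligns with the no-boundary evaluation in Equation~\eqref{eq:glob-ca}, and that the concatenation convention used to embed $\F_2^n \times \F_2^n$ into $\F_2^{2n}$ in Section~\ref{subsec:desc-gen} matches the vertical stacking of the two blocks of $M_{f,g}$. Once these conventions are stated explicitly, the induction closes immediately.
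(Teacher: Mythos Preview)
Your proposal is correct and follows essentially the same approach as the paper: an induction on $t$ reducing to the one-step identity $H(s)=M_{f,g}\cdot s^\top$. The only minor differences are that the paper uses $t=1$ as the base case (citing the discussion preceding the lemma for the one-step identity), whereas you start at $t=0$ and spell out explicitly why each row of the Sylvester matrix reproduces the corresponding coordinate of $F(s)$ or $G(s)$; this makes your version slightly more self-contained but not methodologically different.
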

\begin{proof}
We proceed by induction on $t \in \N$. The base case $t=1$ corresponds to the observation above on Theorem~\ref{thm:lin-oca}: a single application of the transformation $H: \F_2^{2n} \to~\F_2^{2n}$ defined in Equation~\eqref{eq:sys-upd} corresponds to the matrix-vector multiplication $M_{f,g} \cdot (x(0),y(0))^\top$. Let us assume now that the claim is valid for any $t \in \N$, and consider the case $t+1$: this is equivalent to iterating $H$ for $t+1$ steps starting from $s(0)$, which can be written equivalently as the composition of $H$ with its $t$-th iterate $H^t$:
\begin{equation}
\label{eq:t-iter2}
s(t+1) = H^{t+1}(s(0)) = H \circ H^t(s(0)) \enspace .
\end{equation}
By induction hypothesis, we know that $H^{t}(s(0)) = M_{f,g}^t \cdot s(0)^\top$, and that a single application of $H$ amounts to multiplying $M_{f,g}$ with the current state vector. Hence, we can rewrite Equation~\eqref{eq:t-iter2} as follows:
\begin{equation}
\label{eq:t-iter3}
H^{t+1}(s(0)) = H \circ H^t(s(0)) = M_{f,g} \cdot (M_{f,g}^t \cdot s(0)^\top)^\top \enspace ,
\end{equation}
from which we conclude that $s(t+1) = M_{f,g}^{t+1} \cdot s(0)^\top = M_{f,g}^{t+1} \cdot (x(0), y(0))^\top$. \qed
\end{proof}
Concerning Problem~\ref{pb:stat} when the underlying OCA are defined by a pair of linear local rules, Lemma~\ref{lm:t-iter} implies that the periods of the cycles in system $\mathcal{S}$ are determined by the \emph{order} of the associated Sylvester matrix $M_{f,g}$, considered as an element of the \emph{general linear group} $GL(2n, \F_2)$. The general linear group $GL(2n, \F_2)$ is defined as the set of all invertible matrices of size $2n \times 2n$ with entries in $\F_2$, equipped with matrix multiplication as a group operation. Indeed, the orthogonality requirement constrains $M_{f,g}$ to be invertible, and Lemma~\ref{lm:t-iter} establishes that the $t$-th iterate of the transformation $H$ corresponds to the $t$-th power of such matrix. Thus, determining the period of a state $s \in \F_2^{2n}$ is equal to finding the minimum $t \in \N$ such that $M_{f,g}^t = I_{2n}$, i.e. the $t$-th power of $M_{f,g}$ is the identity matrix of order $2n$. This is in turn equivalent to determining the order of the cyclic subgroup generated by $M_{f,g}$ in $GL(2n, \F_2)$.

It is a well-known fact (see e.g.~\cite{jacobson85,mullen13}) that the order of the general linear group $GL(2n, \F_2)$, or equivalently its cardinality, is equal to:
\begin{equation}
\label{eq:gl}
|GL(2n,\F_2)| = (2^{2n}-1) (2^{2n} - 2) (2^{2n} - 2^2) \cdots (2^{2n} - 2^{2n-1}) \enspace .
\end{equation}

Let us now recall \emph{Lagrange's theorem}~\cite{gallian12}: \emph{the order of any subgroup $H \le G$ of a finite group $G$ must divide the order of $G$}. Consequently, when determining the order of an invertible Sylvester matrix $M_{f,g} \in GL(2n, \F_2)$, there is no need to consider all powers $t \in \{1,\cdots, |GL(2n, \F_2)|\}$ and check what is the minimum value such that $M_{f,g}^{t} = I_{2n}$. Rather, one has to check this condition only among the \emph{divisors} of $|GL(2n,\F_2)|$. Moreover, it follows that the maximum order attainable by such a Sylvester matrix is $2^{2n}-1$. Indeed, we know that the maximum period reachable by a pair of OCA can be at most $2^{2n}$, due to the fact that the phase space $\F_2^{2n}$ of $\mathcal{S}$ is composed of $2^{2n}$ elements. According to Equation~\eqref{eq:gl}, the maximum order which is both a divisor of $|GL(2n,\F_2)|$ and which is less than or equal to $2^{2n}$ is $2^{2n}-1$. It is not difficult to see that any multiplication of two or more factors occurring in Equation~\eqref{eq:gl} excluding $2^{2n}-1$ is always greater than $2^{2n}$, and thus it cannot represent a valid maximum order for an invertible Sylvester matrix. To summarize, we have proved the following:
\begin{theorem}
\label{thm:max}
Let $d \in \N$, $n=d-1$ and $\mathcal{S} = \langle \F_2^{2n}, H \rangle$ be the dynamical system where $H$ is defined as in Equation~\eqref{eq:sys-upd}, with OCA $F,G: \F_2^{2n} \to \F_2^n$ generated by a pair of linear bipermutive rules $f,g: \F_2^d \to \F_2$. Then, the period $p$ of any state $s \in \F_2^{2n}$ is at most $p \le 2^{2n} - 1$.
\end{theorem}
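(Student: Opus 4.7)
The plan is to combine Lemma~\ref{lm:t-iter} with Lagrange's theorem applied in $GL(2n, \F_2)$, and then close with a phase-space argument that exploits the linearity of the Sylvester matrix $M_{f,g}$.

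First, I would invoke Lemma~\ref{lm:t-iter} to rewrite the $t$-th iterate of $H$ on an initial state $s(0) \in \F_2^{2n}$ as $M_{f,g}^t \cdot s(0)^\top$. The period $p$ of $s(0)$ is then, by definition, the smallest positive integer satisfying $M_{f,g}^p \cdot s(0)^\top = s(0)^\top$, so $p$ divides the order of $M_{f,g}$ viewed as an element of the group $GL(2n, \F_2)$. By Lagrange's theorem applied to the cyclic subgroup $\langle M_{f,g}\rangle \le GL(2n, \F_2)$, this order divides $|GL(2n, \F_2)|$, and hence $p$ must itself divide the product $\prod_{i=0}^{2n-1}(2^{2n} - 2^i)$ given in Equation~\eqref{eq:gl}.

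Next, I would close the argument with a linearity observation. Because $M_{f,g}$ is a linear operator on $\F_2^{2n}$, the zero vector is fixed and thus has period $1$. For any nonzero $s(0)$, the orbit under iteration of $M_{f,g}$ must be contained in $\F_2^{2n} \setminus \{0\}$, whose cardinality is $2^{2n}-1$; since the period coincides with the orbit size, this yields $p \le 2^{2n}-1$. Combined with the trivial case $s(0)=0$, the stated bound holds for every state.

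The delicate point I expect to be the main obstacle is justifying why the bound is $2^{2n}-1$ rather than the naive $2^{2n}$. The pure phase-space estimate only gives $p \le 2^{2n}$, and for $n \ge 2$ the value $2^{2n}$ actually divides $|GL(2n, \F_2)|$, so Lagrange's theorem by itself cannot exclude a period equal to $2^{2n}$. The fixed point at the origin is what reserves a state outside every nontrivial orbit and tightens the estimate by one unit. Reassuringly, $2^{2n}-1$ does appear as a factor in the expansion of $|GL(2n,\F_2)|$, so it is compatible with the divisibility constraint inherited from Lagrange; the empirical findings in Section~\ref{subsec:prel-res} further suggest the bound is attained for suitable linear OCA pairs, a matter to be revisited when designing the enumeration algorithm.
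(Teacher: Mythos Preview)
Your argument is correct, and in fact it is tighter than the paper's own. Both you and the paper start from Lemma~\ref{lm:t-iter} and invoke Lagrange's theorem in $GL(2n,\F_2)$, but you diverge at the decisive step of excluding period $2^{2n}$. The paper argues that the largest divisor of $|GL(2n,\F_2)|$ not exceeding $2^{2n}$ is $2^{2n}-1$, reasoning informally about products of the factors displayed in Equation~\eqref{eq:gl}. As you yourself anticipate in your ``delicate point'' paragraph, this is not watertight: for $n\ge 2$ one has $2^{2n}\mid |GL(2n,\F_2)|$ (the $2$-adic valuation of the product is $n(2n-1)\ge 2n$), so the divisor argument alone cannot rule out $p=2^{2n}$. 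Your route---$M_{f,g}$ is linear, hence fixes the origin, hence every nonzero orbit lies inside $\F_2^{2n}\setminus\{0\}$ and has length at most $2^{2n}-1$---sidesteps this entirely and delivers the bound without appeal to the group order. In your write-up the Lagrange material is therefore decorative for the upper bound (it earns its keep only later, in the enumeration algorithm), and you could streamline by leading directly with the fixed-point observation.
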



As an application of the results above, we present a combinatorial algorithm to enumerate all linear OCA pairs of diameter $d$ with maximal period:
\begin{description}
\item[{\sc Enumerate-Maximal-Lin-OCA}$(d)$]
\item[Initialization:] Set $n=d-1$, $t = 2^{2n}-1$, $C = |GL(2n, \F_2)|$
\item[Loop:] For each pair of polynomials $P_f(X), P_g(X) \in \F_2[X]$ with degree $n$ and nonzero constant term do:
  \begin{itemize}
    \item[{\bfseries If}] $\gcd(P_f(X), P_g(X)) = 1$ {\bfseries then}
    \begin{description}
        \item[{\bfseries If}] $M_{f,g}^t = I_{2n}$ AND $M_{f,g}^e \neq I_{2n}$ for all $e|C$, $e < t$ {\bfseries then}
        \begin{itemize}
        \item[--] Print the pair $P_f(X), P_g(X)$
        \end{itemize}
        \item[{\bfseries End If}]
    \end{description}
    \item[{\bfseries End If}]
  \end{itemize}
\item[End Loop]
\end{description}
The algorithm visits all pairs of binary polynomials of degree $n=d-1$ and nonzero constant term, and first checks if the associated polynomials are relatively prime (or equivalently if the two rules form an OCA pair) by computing their greatest common divisor. If this is the case, the algorithm verifies whether the Sylvester matrix has maximal order $t=2^{2n-1}$. By Lagrange's theorem, this operation is accomplished by checking that $M_{f,g}^t = I_{2n}$ and $M_{f,g}^e$ is \emph{not} the identity matrix for any divisor $e$ of the order of  $GL(2n,\F_2)$ less than $t$. If this condition is satisfied, the pair of polynomials $P_f(X), P_g(X)$ is printed.

We applied this algorithm to enumerate all maximal period linear OCA pairs up to diameter $d=11$. For each value of $d$, Table~\ref{tab:enum} reports the number of linear OCA pairs ($\#\mathcal{LOCA}_d$, taken from~\cite{mariot20}), the numbers of pairs with maximal period $2^{2n}-1$ ($\#m\mathcal{LOCA}_d$) and the time required to enumerate them. In particular, we implemented the algorithm in Java and performed the experiments on a 64-bit Linux machine with a 16-core AMD Ryzen processor running at 3.5 GHz and 48 GB of RAM.
\begin{table}[t]
\centering
\caption{Number of maximal period linear OCA pairs of diameter $d\le 11$.}
\begin{tabular}{cccccc}
\hline\noalign{\smallskip}
$d$ & $n$ & $2^{2n}-1$ & $\#\mathcal{LOCA}_d$ & $\#m\mathcal{LOCA}_d$ & Time \\
\noalign{\smallskip}\hline
2 & 1 & 3 & 0 & $-$ & $-$ \\
3 & 2 & 15 & 1 & 1 & $<$ 1s \\
4 & 3 & 63 & 5 & 1 & $<$ 1s \\
5 & 4 & 255 & 21 & 3 & $<$ 1s \\
6 & 5 & 1023 & 85 & 15 & $<$ 1s \\
7 & 6 & 4095 & 341 & 42 & 3.967s \\
8 & 7 & 16383 & 1365 & 181 & 59.162s \\
9 & 8 & 65535 & 5461 & 572 & 18m59.302s \\
10 & 9 & 262143 & 21845 & 1872 & 5h56m10.208s \\
11 & 10 & 1048575 & 87381 & 5899 & 4d16h27m22.126s \\
\hline
\end{tabular}
\label{tab:enum}
\end{table}
As a general remark, it can be noticed that the time required to run {\sc Enumerate-Maximal-Lin-OCA} grows quite rapidly, with more than 4 days required to sift through all pairs of linear bipermutive rules of diameter $11$. Indeed, the most time-consuming step is the computation of the period of the Sylvester matrix. Although we used the {\sc Square-and-Multiply} algorithm~\cite{knuth14} to efficiently exponentiate the matrix, this operation still needs to be performed for all divisors of $2^{2n}-1$ (which is still a significant reduction rather than checking all exponents smaller than $2^{2n}-1$). Also, a second observation is that the number of pairs reaching maximal period $2^{2n}-1$ seems to represent a small subset of linear OCA. In particular, remark that the fourth and fifth columns of Table~\ref{tab:enum} are normalized up to the symmetries considered in~\cite{mariot17}.

\section{Open Problems and Future Directions}
\label{sec:outro}
The theoretical results and the empirical findings of the previous section prompt us with several open problems and directions for further research on maximal period sequences generated with OCA. To begin with, it would be interesting to find a recurrence equation to count all linear OCA pairs with maximal period. Equivalently, this problem amounts to count the number of invertible Sylvester binary matrices of size $2n \times 2n$ with maximum order $2^{2n}-1$. Apparently, this problem has not been studied before in the literature, since the sequence corresponding to the fifth column of Table~\ref{tab:enum} is not reported in the OEIS~\cite{sloane}. A second interesting direction for further research would be to find an efficient characterization of linear OCA with maximal period. Indeed, the main limitation of our approach is that it relies on computing the order of a matrix, which is computationally expensive. Yet, we are interested in Sylvester matrices, which have a very specific structure. It may thus be possible that maximal order can be characterized as a property of the polynomials that define the matrix. Finally, more in general, one could broaden the scope of the investigation to characterize linear OCA pairs with smaller periods, and analyze more closely also the periods of nonlinear OCA pairs. The authors of~\cite{mariot17a} already addressed the construction of nonlinear OCA pairs using \emph{Genetic Algorithms} (GA) and \emph{Genetic Programming} (GP). In this regard, one interesting direction would be to consider the maximization of the largest period as a further optimization goal for GA and GP, either in a single-objective or multi-objective setting.

\section*{Appendix: Source Code and Experimental Data}
The source code of the algorithm and the experimental data discussed in this paper are available at \url{https://github.com/rymoah/hip-to-be-latin-square}.

\section*{Acknowledgments}
The author wishes to thank Luca Manzoni and Antonio E. Porreca for a helpful preliminary discussion on the definition of the dynamical system based on orthogonal CA.

\bibliographystyle{abbrv}
\bibliography{bibliography}

\end{document}